\documentclass[conference]{IEEEtran}
\IEEEoverridecommandlockouts
\usepackage{cite}
\usepackage{amsmath,amssymb,amsfonts}
\usepackage{graphicx}
\usepackage{textcomp}
\usepackage{xcolor}

\usepackage{graphicx}
\usepackage{algpseudocode,algorithm}
\usepackage{bm}
\usepackage{url}
\usepackage{subfigure}
\usepackage{hhline}
\usepackage{hyperref}
\usepackage{textcomp}
\usepackage{xcolor}
\usepackage{comment}
\usepackage[hang,flushmargin]{footmisc}
\usepackage{lipsum}
\makeatletter
\newcommand{\algorithmfootnote}[2][\footnotesize]{%
  \let\old@algocf@finish\@algocf@finish
  \def\@algocf@finish{\old@algocf@finish
    \leavevmode\rlap{\begin{minipage}{\linewidth}
    #1#2
    \end{minipage}}%
  }%
}
\makeatother

\newcommand{\argmax}{\mathop{\rm arg~max}\limits}

\newcommand{\sign}{\mathop{\rm sign}\limits}
\usepackage{amsthm}
\theoremstyle{definition}

\newtheorem{problem}{Problem}
\newtheorem{lemma}{Lemma}
\newtheorem{proposition}{Proposition}
\newtheorem{definition}{Definition}
\usepackage{xparse}

\NewDocumentCommand{\timeseries}{O{}O{}O{}}{#1_{#2}[{#3}]}
\NewDocumentCommand{\ret}{O{}O{}}{\timeseries[r][#1][#2]}
\NewDocumentCommand{\vecret}{O{}O{}}{\bm{r}_{#1}[{#2}]}
\NewDocumentCommand{\predret}{O{}O{}}{\hat{r}_{#1}[{#2}]}
\NewDocumentCommand{\signret}{O{}O{}}{\hat{b}_{#1}[{#2}]}
\NewDocumentCommand{\price}{O{}O{}}{X_{#1}[{#2}]}
\NewDocumentCommand{\numterm}{}{M}
\newcommand{\mean}{\mathop{\rm mean}\limits}

\usepackage{subfigure}
\makeatletter
\newcommand{\figcaption}[1]{\def\@captype{figure}\caption{#1}}
\newcommand{\tblcaption}[1]{\def\@captype{table}\caption{#1}}
\makeatother

\def\BibTeX{{\rm B\kern-.05em{\sc i\kern-.025em b}\kern-.08em
    T\kern-.1667em\lower.7ex\hbox{E}\kern-.125emX}}
\begin{document}

\title{Uncertainty Aware Trader-Company Method: Interpretable Stock Price Prediction Capturing Uncertainty}

\author{\IEEEauthorblockN{1\textsuperscript{st} Yugo Fujimoto}
\IEEEauthorblockA{\textit{Innovation Lab} \\
\textit{Nomura Asset Management Co, Ltd.}\\
Tokyo, Japan \\
yu5fujimoto@gmail.com}
\and
\IEEEauthorblockN{2\textsuperscript{nd} Kei Nakagawa}
\IEEEauthorblockA{\textit{Innovation Lab} \\
\textit{Nomura Asset Management Co, Ltd.}\\
Tokyo, Japan \\
kei.nak.0315@gmail.com}
\and
\IEEEauthorblockN{3\textsuperscript{rd} Kentaro Imajo}
\IEEEauthorblockA{
\textit{Preferred Networks, Inc.}\\
Tokyo, Japan \\
imos@preferred.jp}
\and
\IEEEauthorblockN{4\textsuperscript{th} Kentaro Minami}
\IEEEauthorblockA{
\textit{Preferred Networks, Inc.}\\
Tokyo, Japan \\
minami@preferred.jp}
}

\maketitle

\begin{abstract}
Machine learning is an increasingly popular tool with some success in predicting stock prices.
One promising method is the Trader-Company~(TC) method, which takes into account the dynamism of the stock market and has both high predictive power and interpretability.
Machine learning-based stock prediction methods including the TC method have been concentrating on point prediction.
However, point prediction in the absence of uncertainty estimates lacks credibility quantification and raises concerns about safety. 
The challenge in this paper is to make an investment strategy that combines high predictive power and the ability to quantify uncertainty.
We propose a novel approach called Uncertainty Aware Trader-Company Method~(UTC) method. 
The core idea of this approach is to combine the strengths of both frameworks by merging the TC method with the probabilistic modeling, which provides probabilistic predictions and uncertainty estimations.
We expect this to retain the predictive power and interpretability of the TC method while capturing the uncertainty.
We theoretically prove that the proposed method estimates the posterior variance and does not introduce additional biases from the original TC method.
We conduct a comprehensive evaluation of our approach based on the synthetic and real market datasets.
We confirm with synthetic data that the UTC method can detect situations where the uncertainty increases and the prediction is difficult. 
We also confirmed that the UTC method can detect abrupt changes in data generating distributions.
We demonstrate with real market data that the UTC method can achieve higher returns and lower risks than baselines.
\end{abstract}

\begin{IEEEkeywords}
Finance, Metaheuristics, Stock Price Prediction, Uncertainty.
\end{IEEEkeywords}

\section{Introduction} \label{ch:intro}
Stock price predictability has been an important research topic in both academia and industry since it reflects our economic and social organization and the stock market plays an important role in the world economy. 
Although the dynamic nature of our economic activity makes it harder to predict future stock prices, significant efforts are made to explain the dynamism. 
From this perspective, stock markets have often been modeled as a complex, evolutionary, and nonlinear dynamical system~\cite{cont2001empirical,hommes2001financial,borges2010efficient}.

Due to the dynamic nature of our economic activity, machine learning is an increasingly popular tool with some success in predicting stock prices~\cite{henrique2019literature,nakagawa2020ric,imajo2021deep}.
This is because many machine learning methods can automatically capture nonlinear relationships between relevant factors from the input data~\cite{cavalcante2016computational,chen2018integrating}.
One promising method among them is the Trader-Company~(TC) method, a metaheuristic stock prediction model that mimics the roles of an actual financial institute and traders within it~\cite{ito2021trader}.
The TC method consists of two components, the predictor called a Trader and the aggregation algorithm called a Company.
The TC method considers the dynamism of the stock market and has both high predictive power and interpretability.

Stock prediction methods based on machine learning, including the TC method, have been concentrating on estimating and improving point predictions.
However, point predictions in the absence of uncertainty estimates lack credibility quantification and raise concerns about safety.
Considering the significant consequences of decision-making in financial practice, quantifying the uncertainty of predictions is proving to be a key step in putting machine learning models into practice~\cite{jaeger2021understanding,philps2021interpretable}.
For example, most trades (80\%) are automated~\cite{URL} and the algorithmic tradings based on the machine learning method have played a crucial role in financial markets.
The algorithmic tradings focused on the large investment universe of stocks and sampled data at very high frequencies (intraday or tick by tick).
In such an environment with a large amount of data, it is important for practitioners to quantify the uncertainty of predictions.
Therefore, the challenge in this paper is to make an investment strategy with high predictive power and can quantify the uncertainty of predictions.

To formalize our discussion of the uncertainty of predictions, we will rely on probabilistic modeling. Probabilistic modeling, which can provide probabilistic predictions and uncertainty estimations simultaneously, has been a fundamental tool in machine learning and related fields~\cite{bishop2006pattern}. 
Most of these studies rely on a Bayesian framework, and their applications to complex models such as neural networks and decision tree models have been actively studied. Among them, one standard approach is directly estimating the distribution of predictions. However, it has been pointed out that these methods tend to make predictions biased toward one specific mode~\cite{Fort2019,malinin2021uncertainty}.
Another approach is ensemble-based uncertainty estimation, which focuses on the dispersion of predictions~\cite{pmlr-v48-gal16,NIPS2017_9ef2ed4b,malinin2021uncertainty}. 
These methods are experimentally confirmed to be more robust to dataset shift than methods that explicitly learn distributions~\cite{NIPS2017_9ef2ed4b,malinin2021uncertainty}. That is effective in predicting a dynamic environment, such as financial markets.

Based on the above studies, we propose a novel approach called the Uncertainty Aware Trader-Company~(UTC) method.
The core idea of this approach is to combine the strengths of both frameworks by merging the TC method with the 
probabilistic modeling framework. 
We expect to retain the predictive power and interpretability of the TC method while capturing the uncertainty.
To be more concrete, we propose the method of estimating the prediction's uncertainty from the Traders' output.
\if 0
We separately estimate the uncertainty of the prediction into inter-trader, which represents the uncertainty on weighting, and intra-trader variance, which represents the uncertainty of effectiveness of the whole traders' strategies.
The inter-trader variance corresponds to the uncertainty on weighting given a trader's strategy, while the intra-trader variance addresses uncertainty about the effectiveness of the whole traders' strategies.
\fi
We estimate the variance by the two-stage algorithm: estimation by the Trader and estimation by the Company.
The Trader estimates the uncertainty on weighting given a trader's strategy while the Company estimates the uncertainty about the effectiveness of the whole traders' strategies.
We theoretically show that our uncertainty estimation reflects the posterior variance of predictive return given the past return.
Also, we prove that the predictive return of the UTC method is identical to that of the original TC method under some assumptions in the prior distribution of traders. That means our method does not introduce additional biases.

We conduct a comprehensive evaluation of our approach based on the synthetic and real market datasets.
Our evaluation of the synthetic datasets demonstrates that the UTC method can detect situations where the uncertainty increases and the prediction is difficult.
We also confirmed that our method can detect abrupt changes of data generating distributions.
Furthermore, experiments using actual data show that the investment strategy based on our UTC method gains stable returns while suppressing risks compared to existing investment strategies.

The remainder of this paper is organized as follows. 
Section 2 describes our problem formulation and TC method briefly. Section 3 presents our UTC method and theoretical properties. Section 4 performs experiments. Section 5 reviews the related work, and Section 6 is the conclusion.

\section{Preliminary} \label{sec:prelim}
In this section, we formulate our problem and then provide the overview of the TC method~\cite{ito2021trader}.

\subsection{Problem Formulation} \label{sec:problem}
Our problem is to forecast future returns of stocks based on their historical observations. 

Let $\price[i][t]$ be the price of stock $i$ at time $t$ where $1 \le i \le S$ denotes the index of stocks and $0 \le t \le T$ denotes the time index.
We use the logarithmic returns of stock prices as input features of models; we denote the one period ahead return of stock $i$ by 
\begin{align}\label{logret}
\ret[i][t] := \log (\price[i][t] / \price[i][t-1]) \approx \frac{\price[i][t]-\price[i][t-1]}{\price[i][t-1]}.
\end{align}

Then we define the returns of stock $i$ and returns of multiple stocks $1 \le i \le j \le S$ from time period $u$ to $v~(u \leq v)$ by
\begin{align}\label{vecret}
&\vecret[i][u:v]:=(\ret[i][u],\cdots,\ret[i][v]),\\
&\vecret[i:j][u:v]:=(\ret[i][u:v],\cdots,\ret[j][u:v])
\end{align}

We can formulate our main problem as follows.
\begin{problem}[one-period-ahead prediction]\label{main_problem}
We sequentially observe the returns $\ret[i][t]$ ($1 \le i \le S$) at every time $0 \le t \le T-1$. 
We predict the one-period-ahead return $\ret[i][t+1]$ and estimate its predictive uncertainty $\sigma_i[t+1]$ based on the past $t$ returns $\vecret[1:S][0:t]$. 
That is, the one-period-ahead return and uncertainty prediction can be written as
\begin{equation}\label{predret}
\predret[i][t+1], \hat{\sigma}_i[t+1] = f_{t}(\vecret[1:S][0:t])
\end{equation}
for some function $f_t$.
The purpose of this study is to find $f_t$ whose output $\predret[i][t+1]$ approximates the true return $\ret[i][t+1]$ and predictive standard deviation $\hat{\sigma}_i[t+1]$ approximates the estimation error between  $\predret[i][t+1]$ and $\ret[i][t+1]$ well.
\end{problem}

\subsection{Trader Company Method}\label{sec:tc}
This section introduces the Trader-Company method briefly. The TC method consists of two main components, \textit{Traders} and \textit{Companies}. A Trader predicts the returns using a simple model expressing realistic trading strategies, while a Company combines strategies from multiple Traders into a single prediction. 
A Company applies an evolutionary algorithm that mimics the role of financial institutes as employers of traders. 
During training, a Company generates promising new candidates for Traders and deletes poorly performing ones.
We provide more detailed definitions and training algorithms for the TC method.

\begin{table}
\caption{Notation.}
\label{tab:notation}
\begin{center}

\begin{tabular}{clc}
\hline
\textbf{Notation} & \multicolumn{1}{c}{\textbf{Meaning}} & \textbf{Def.} \\
\hline
$\price[i][t]$ & 
\begin{tabular}{l}
stock price of stock $i$ at time $t$\\
where $1\le i \le S, 0\le t \le T$
\end{tabular}& $\S$ \ref{sec:problem} \\
$\ret[i][t]$ & logarithmic return of $i$ at $t$& \eqref{logret}\\
$\vecret[i][u:v]$ & $(\ret[i][u],\cdots,\ret[i][v])$ & \eqref{vecret}  \\
$\vecret[i:j][u:v]$ & $(\ret[i][u:v],\cdots,\ret[j][u:v])$ & \eqref{vecret}  \\
$\predret[i][t+1],\hat{\sigma}_i[t+1]$ & predicted value and standard deviation of $\ret[i][t]$& \eqref{predret}\\
\begin{tabular}{l}$\numterm,P_j,Q_j,D_j$\\$F_j,A_j,O_j$\end{tabular}
& hyper-parameters of Traders & \eqref{TCtraders}\\
\hline
\end{tabular}
\end{center}

\end{table}

\subsubsection{Traders - Simple Prediction Module} \label{subsec:TCtraders}
\begin{definition}
A Trader is a predictor of one period ahead returns defined as follows.
Let $\numterm$ be the number of terms in the prediction formula.
For each $1 \le j \le \numterm$, we define $P_j,Q_j$ as the indices of the stock to use, $D_j,F_j$ as the delay parameters, $O_j$ as the binary operator, $A_j$ as the activation function, and $w_j$ as the weight of the $j$-th term. 
Then, the Trader predicts the return value $\ret[i][t+1]$ at time $t + 1$ by the formula
\begin{align}\label{TCtraders}
    f_{\theta,w}(\vecret[1:S][0:t])= \sum_{j=1}^\numterm w_j                 A_j(O_j(\ret[P_j][t-D_j],\ret[Q_j][t-F_j])).
\end{align}
where $\theta$ is the parameters of the Trader:
$$\theta := \{\numterm,\{P_j,Q_j,D_j,F_j,O_j,A_j\}_{j=1}^\numterm\}.$$

\end{definition}
For activation functions $A_j$, the TC method used standard activation functions used in deep learning, such as the identity function, hyperbolic tangent function, hyperbolic sine function, and Rectified Linear Unit (ReLU~\cite{nair2010rectified}). 
For the binary operators $O_j$, we use several arithmetic binary operators (e.g., $x + y$, $x - y$, and $x \times y$), the coordinate projection, $(x, y) \mapsto x$, the max/min functions, and the comparison function $(x > y) = \sign(x - y)$.

The formula \eqref{TCtraders} is interpretable in that it has a similar form to typical human-generated trading strategies ~\cite{Kakushadze2018}.
Second, the Trader model has sufficient expressive power.
The Trader has various binary operators as fundamental units, which allows it to represent any binary operations commonly used in practical trading strategies. Besides, the model also 
encompasses the linear models since we can choose the projection operator $(x, y) \mapsto x$ as $O_j$.

The Trader is optimized to maximize the cumulative return of its strategy.
\begin{align}\label{traderloss}
    \argmax_{\theta,w} \sum_{u}\sign(f_{\theta,w}(\vecret[1:S][0:u]))\cdot \ret[i][u+1] 
\end{align}
Since the parameter $\theta$ is a discrete variable, standard optimization methods with derivatives are difficult to apply.
Therefore, the TC method introduces an evolutionary algorithm driven by Company models.

\subsubsection{Companies - Optimization and Aggregation Module}\label{sec:TCcompany}
\begin{algorithm}[ht]
    \caption{Educate algorithm of Company in TC}
    \algrenewcommand\algorithmicrequire{\textbf{Input:}}
    \algrenewcommand\algorithmicensure{\textbf{Output:}}
    \label{TCcompanyeducate}
    \begin{algorithmic}[1]
    \Require $\vecret[1:S][0:t]$:stock returns before $t$
    \Require Traders. $N$ : the number of Traders. $Q$: ratio.
    \Ensure Traders
    \Function{CompanyEducate}{}
    \State{$R_n\Leftarrow R(f_{\theta_n,w_n},\vecret[1:S][0:t],\ret[i][0:t+1])$}
    \Comment{Trader's return \eqref{traderloss}}

    \State{$R^* \Leftarrow$ bottom $Q$ percentile of $\{R_n \}$}
    \For{$n \in \{m| R_m \le R^* \}$}
        \Comment{for all bad traders}

        \State{Update $w_i$ in (\ref{TCtraders}) by least squares method}
    \EndFor\\
    \Return{Traders}
    \EndFunction
    \end{algorithmic}
\end{algorithm}

\begin{algorithm}[ht]
    \caption{Prune-and-Generate algorithm of Company}
    \algrenewcommand\algorithmicrequire{\textbf{Input:}}
    \algrenewcommand\algorithmicensure{\textbf{Output:}}
    \label{TCcompanygen}
    \begin{algorithmic}[1]
    \Require $\vecret[1:S][0:t]$:stock returns before $t$, F: \# of fit times
    \Require $N$: the number of Predictors. $Q$: ratio.
    \Ensure $N'$ Predictors 
    \State{$\theta_n,w_n \sim $ Uniform Distribution}
    \For{$k=1,\cdots ,F$}
    \State{$R_n \Leftarrow R(f_\Theta,\vecret[1:S][0:t],\ret[i][0:t+1])$}
    \Comment{Trader's return \eqref{traderloss}}

    \State{$R^* \Leftarrow$ bottom $Q$-percentile of $\{ R_n \}$}
    \State{$\{(\theta_j,w_j)\}_j \Leftarrow \{(\theta_n,w_j) | R_n \ge R^* \}$}
    \Comment{Pruning}
    \State{$\{(\theta_j,w_j)\}_{j=1}^{N'} \sim$ GM fitted to $\{(\theta_j,w_j)\}_j$  *}
    \Comment{Generation}
    \EndFor\\
    \Return{$N'$ Predictors with $\{(\theta_j,w_j)\}_{j=1}^{N'}$}
    \end{algorithmic}
    * If the parameter is an integer, we round it off.

\end{algorithm}

In this framework, a Company maintains $N$ Traders that act as weak learners or feature extractors and aggregate them. Given $N$ Traders specified by parameters $\theta_1, \ldots,\theta_N$,
$w_1, \ldots, w_N$ and the past observations of stock returns $\vecret[1:S][0:t]$, a Company predicts the one-period-ahead return by
\[
    \hat{r}[t + 1] =
    \mathrm{Aggregate}(f_{\theta_1,w_1}, \ldots, f_{\theta_n,w_n}).
\]
Here, we employed the simple averaging 
$$\frac{1}{N} \sum_{n = 1}^N f_{\theta_n,w_n}(\vecret[1:S][0:t])$$
for $\mathrm{Aggregate}$ function.

The Company should maintain the average quality as well as the diversity of the Traders' strategies to achieve low training errors whilst avoiding overfitting.
For this purpose, the TC method introduced the Educate algorithm (Algorithm \ref{TCcompanyeducate}) and the Prune-and-Generate algorithm (Algorithm \ref{TCcompanygen}), which update the weights and formulae of Traders, respectively.

\begin{description}
    \item[\textbf{Educating Traders}:]  ~\\
    Recall that a Trader \eqref{TCtraders} is a linear combination of $M$ mathematical formula. 
    We update the weights $\{ w_j \}_j $ by the least-squares method.
    \item[\textbf{Pruning Traders and generating new candidates}:]~\\
    Since the discrete parameters such as stock indices are difficult to optimize, they update these parameters by Algorithm \ref{TCcompanygen}. 
    First, we evaluate the (cumulative) returns of the Traders and remove them with relatively low returns. 
    Then, we generate new Traders by randomly fluctuating the existing Traders with good performances, i.e., we fit continuous Gaussian mixture distribution, draw new parameters from it and discretize the generated samples for discrete indices. 
\end{description}

\begin{figure*}[t]
\begin{center}
    \includegraphics[width=2.0\columnwidth]{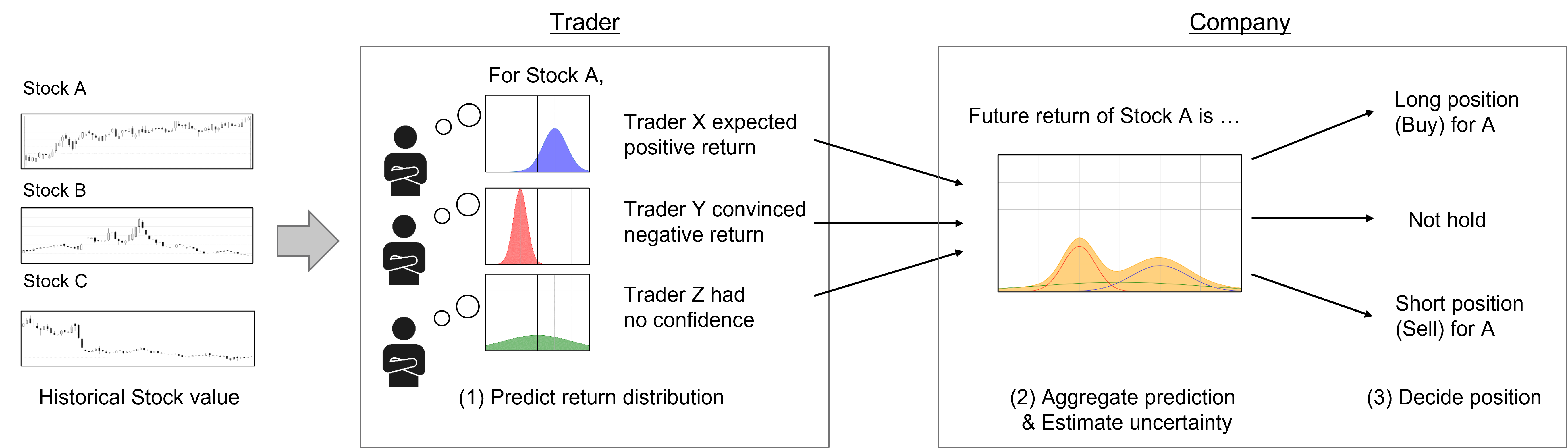}
    \caption{Overview of prediction procedure in our Uncertainty Aware Trader-Company Method}
    \label{fig:UTCprocess}
\end{center}
\end{figure*}
\section{Uncertainty Aware Trader-Company Method}
In this section, we present a \textit{Uncertainty Aware Trader-Company Method}, which extends the Trader-Company method to consider the uncertainty.
As with the original TC method, our method consists of two main components, \textit{Traders} and \textit{Companies}. 
We propose extensions in three major aspects to exploit the uncertainty in the Trader's prediction and the diversification of Traders' prediction.
We show the overview of our method in Figure \ref{fig:UTCprocess}.
\subsection{Trader} \label{sec:trader}
We basically follow the definition of traders in TC methods in section \ref{subsec:TCtraders}.
The difference from the TC method is that we assume each parameter $\theta_n$ and $w_n$ is randomly chosen from the empirical distribution $\Theta$.
As mentioned in the introduction, few solid strategies will always be effective due to the volatile and uncertain behavior of the financial market. It is important to estimate the reliability of the strategies in such an environment. Therefore, instead of learning a deterministic strategy, we learn a probability distribution over strategies to estimate the uncertainty of the prediction.
In other words, our goal is to find a distribution $\Theta$ which approximates the posterior distribution.

Next, we will discuss how the Trader estimates the predictive uncertainty.
Since we assumed the weights of the Trader are chosen from the empirical distribution $\Theta$, the output of the Trader is also probabilistic.
Here, we assumed that the distribution of weights follows a multivariate normal distribution with a mean of $m$ and the covariance of $\Sigma$.
We can calculate the mean and variance for the Trader's output as follows.
\begin{align}\label{btctrader}
    & \mu_n = m^\top z_n,~~\sigma_n^2 = z_n^\top \Sigma z_n
\end{align}
Here, $z_n$ is the signal defined as 
\begin{align} \label{eq:signal}
z_n[j] = A_j(O_j(\ret[P_j][t-D_j],\ret[Q_j][t-F_j])).
\end{align}
The variance of Trader's prediction estimated here is used to predict the return and estimate the variance of the Company prediction.

Next, we will discuss how the Company optimizes the distribution $\Theta$.
\subsection{Company}\label{sec:company}
We basically employ the same framework as the TC method; our Company maintains $N$ traders that act as weak learners or feature extractors and aggregate them. 
However, each Trader estimates not only the expected value but the variance of the predicted return. 
First, we introduce how to estimate the predictive return and its uncertainty based on these outputs.
Given $N$ Traders 
with parameters $\theta_1, \ldots, \theta_n$ and the past observations of stock returns $\vecret[1:S][0:t]$, a Company predicts the one-period-ahead return $\hat{r}[t + 1]$.
The original TC method applies the most naive aggregating method, simply averaging.
If we apply this, we could calculate the mean and variance of the prediction by
\begin{align}
    &\mu = \frac{1}{N}\sum_{n \in \{1,2,\cdots,N\}} \mu_n,~~\sigma = \sqrt{\sigma^2_{intra}+\sigma^2_{inter}}, \nonumber\\
    & \sigma^2_{intra} = \frac{1}{N}\sum_{n \in \{1,2,\cdots,N\}} (\mu_n - \mu)^2,\nonumber \\ & \sigma^2_{inter} = \frac{1}{N}\sum_{n \in \{1,2,\cdots,N\}} \sigma_n^2. \label{eq:variance}
\end{align}

For clarity, this procedure is presented in Algorithm \ref{companypredict}. 
\begin{algorithm}[ht]
    \caption{Prediction and uncertainty estimation algorithm of Company in UTC}
    \label{companypredict}
    \algrenewcommand\algorithmicrequire{\textbf{Input:}}
    \algrenewcommand\algorithmicensure{\textbf{Output:}}
    \begin{algorithmic}[1]
    \Require $\vecret[1:S][0:t]$: stock returns before $t$, $\{(\theta_n, w_n)\}_{n=1}^N$ : Traders
\Ensure $\mu[i][t+1]$: predicted return of stock $i$ at $t$, $\sigma^2[i][t+1]$: estimated variance of return of stock $i$ at $t$
    \Function{CompanyPrediction}{}
    \For{$n = 1,\cdots,N$}
        \State{$\mu_n$, $\sigma_n \Leftarrow f_{\theta_n,w_n}(\vecret[1:S][0:t])$
        }
        \Comment{Prediction by Trader \eqref{btctrader}}
    \EndFor
\State{$\mu = \frac{1}{N}\sum_{t \in \{1,2,\cdots,N\}} \mu_t$}
\State{$\sigma^2_{intra} = \frac{1}{N}\sum_{t \in \{1,2,\cdots,N\}} (\mu_t - \mu)^2$}
\State{$\sigma^2_{inter} = \frac{1}{N}\sum_{t \in \{1,2,\cdots,N\}} \sigma_t^2$}
\State{$\sigma = \sqrt{\sigma^2_{intra}+\sigma^2_{inter}}$}\\
\Return {$\mu, \sigma$}
    
    \EndFunction
    \end{algorithmic}
\end{algorithm}

Next, we explain how the Company optimizes Traders.
First, with respect to strategy selection, we perform an iterative process of pruning and generating, similar to the TC method.
Then, we optimize the weight distribution for each Trader in the UTC method.
We use MAP estimation since its solution can be analytically obtained because we assumed a simple normal distribution for the weights.
It helps reduce the computational cost of the UTC method because this optimization process is repeated for many traders.

\begin{algorithm}[ht]
    \caption{Educate algorithm of Company in UTC}
    \algrenewcommand\algorithmicrequire{\textbf{Input:}}
    \algrenewcommand\algorithmicensure{\textbf{Output:}}
    \label{companyeducate}
    \begin{algorithmic}[1]
    \Require $\vecret[1:S][0:t]$:stock returns before $t$
    \Require Traders. $N$ : the number of Traders. $Q$: ratio.
    \Ensure Traders
    \Function{CompanyEducate}{}
    \State{$R_n\Leftarrow R(f_{\Theta_n},\vecret[1:S][0:t],\ret[i][0:t+1])$}
    \Comment{Trader's return \eqref{traderloss}}

    \State{$R^* \Leftarrow$ bottom $Q$ percentile of $\{R_n \}$}
    \For{$n \in \{m| R_m \le R^* \}$}
        \Comment{for all bad traders}

        \State{Update traders' weight distributions by MAP estimation}
    \EndFor\\
    \Return{Traders}
    \EndFunction
    \end{algorithmic}
\end{algorithm}

Overall, they train the TC and UTC models as follows.
\begin{enumerate}
    \item Educate a fixed proportion of poorly performing Traders by Algorithm \ref{TCcompanyeducate} in TC and Algorithm \ref{companyeducate} in UTC.
    \item Replace a fixed proportion of poorly performing Traders with random new Traders by Algorithm \ref{TCcompanygen}.
    \item If the aggregation function $\mathrm{Aggregate}$ has trainable parameters, update them using the data $\vecret[1:S][t_1: t_2]$ and any optimization algorithm.
    \item Predict future returns by averaging in TC and Algorithm \ref{companypredict} in UTC.
\end{enumerate}

\subsection{Theoretical Properties}\label{sec:comp-tc}
Here we discuss the following properties of our UTC method.
\begin{description}
   \item[Proposition\ref{prop1}:]~\\ The UTC method can estimate the uncertainty~(posterior variance) of the prediction.
   \item[Proposition\ref{prop2}:]~\\The UTC method does not introduce additional biases from the TC method.
\end{description}

Traders and their empirical distribution are optimized to earn returns under the training data. 
Therefore, the empirical distribution of the Traders is expected to approximate the true posterior distribution given training data. 
The following proposition shows that the variance of our method approximates the posterior variance given the data under these assumptions.

\begin{proposition}[Posterior Variance Estimation]\label{prop1}
If the empirical distribution of the Trader $q(D)$ trained by UTC approximates the posterior distribution of the Trader $p(\theta|D)$ well, then the posterior variance of the return $\mathbb{V}_{p(y|x,D)}[y]$ can be approximated by the variance calculated by equation (\ref{eq:variance}).
\end{proposition}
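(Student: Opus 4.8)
The plan is to recognize equation~(\ref{eq:variance}) as a Monte Carlo estimator of the law-of-total-variance decomposition of the posterior predictive variance, and then to argue that each approximation it involves is controlled by the hypothesis $q(D)\approx p(\theta\mid D)$. First I would set $x=\vecret[1:S][0:t]$, $y=\ret[i][t+1]$, and condition on the discrete Trader structure $\theta$. Conditioning on $\theta$ fixes the feature vector $z_\theta=z_\theta(x)$ of~(\ref{eq:signal}), while the weights remain Gaussian, $w\mid\theta,D\sim\mathcal{N}(m_\theta,\Sigma_\theta)$, so $y=w^\top z_\theta$ has conditional mean $\mu(\theta):=m_\theta^\top z_\theta$ and conditional variance $\sigma^2(\theta):=z_\theta^\top\Sigma_\theta z_\theta$ --- exactly the quantities in~(\ref{btctrader}). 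The law of total variance with conditioning variable $\theta\sim p(\theta\mid D)$ then gives
\begin{align}
\mathbb{V}_{p(y\mid x,D)}[y]
= \mathbb{E}_{p(\theta\mid D)}\!\left[\sigma^2(\theta)\right]
 + \mathbb{V}_{p(\theta\mid D)}\!\left[\mu(\theta)\right]. \nonumber
\end{align}

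Next I would pass to the empirical distribution of Traders: using the hypothesis, treat the trained population $\theta_1,\dots,\theta_N$ as an (approximate) sample from $q(D)\approx p(\theta\mid D)$ and replace each population functional by its empirical counterpart. Writing $\mu_n=\mu(\theta_n)$, $\sigma_n^2=\sigma^2(\theta_n)$, and $\mu=\frac1N\sum_n\mu_n$, this yields
\begin{align}
\mathbb{E}_{p(\theta\mid D)}\!\left[\sigma^2(\theta)\right] &\approx \tfrac1N \textstyle\sum_{n=1}^N \sigma_n^2 = \sigma^2_{inter}, \nonumber\\
\mathbb{V}_{p(\theta\mid D)}\!\left[\mu(\theta)\right] &\approx \tfrac1N \textstyle\sum_{n=1}^N (\mu_n - \mu)^2 = \sigma^2_{intra}, \nonumber
\end{align}
and therefore $\mathbb{V}_{p(y\mid x,D)}[y]\approx\sigma^2_{intra}+\sigma^2_{inter}=\sigma^2$, which is the claimed identity.

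To upgrade the two ``$\approx$'' to genuine error bounds I would (i) assume a quantitative form of the hypothesis, e.g.\ a total-variation or Wasserstein bound $d(q(D),p(\theta\mid D))\le\varepsilon$; (ii) observe that $\theta$ ranges over a finite set (fixed $\numterm$, bounded stock indices and delays, finitely many operators and activations) and the observed returns are bounded, so $\theta\mapsto(\mu(\theta),\sigma^2(\theta))$ is bounded; and (iii) combine $|\mathbb{E}_q[g]-\mathbb{E}_{p(\theta\mid D)}[g]|\le\|g\|_\infty\varepsilon$ for $g\in\{\sigma^2,\mu,\mu^2\}$ with a standard $O(N^{-1/2})$ Monte Carlo fluctuation term to control $|\mathbb{V}_{p(y\mid x,D)}[y]-\sigma^2|$.

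The variance decomposition and the Monte Carlo identification are routine; the delicate step is justifying the premise itself. The Educate / Prune-and-Generate loop optimizes \emph{training return}, not a posterior likelihood, so identifying the stationary Trader population with a Bayesian posterior $p(\theta\mid D)$ is heuristic: a fully rigorous version would need to (a) specify the likelihood and prior under which return-maximization coincides with (approximate) posterior sampling, and (b) propagate a distance between $q(D)$ and $p(\theta\mid D)$ through functionals of the \emph{discrete} variable $\theta$, where Lipschitz arguments are unavailable and one must rely on boundedness instead. I expect the paper to argue at the level of the informal approximation above rather than with explicit constants.
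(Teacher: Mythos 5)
Your proposal is correct and follows essentially the same route as the paper: the law of total variance conditioned on the Trader structure $\theta$, with the two terms identified as $\sigma^2_{intra}$ and $\sigma^2_{inter}$ via Monte Carlo averaging over Traders sampled from $q(D)\approx p(\theta\mid D)$. Your version is in fact cleaner --- you make the Gaussian weight structure behind equation~(\ref{btctrader}) explicit, correctly write the outer expectations with respect to $p(\theta\mid D)$ where the paper's notation slips to $p(y\mid x,D)$, and candidly flag the heuristic identification of the return-optimized Trader population with a Bayesian posterior, which the paper leaves implicit.
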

\begin{proof}
Since we optimize the weight of the Trader by MAP estimation in Educate step, the Trader's output satisfies 
$$\mu_\theta(x) \approx \mathbb{E}_{p(y|x,\theta,D)}[y],\sigma^2_\theta(x) \approx \mathbb{V}_{p(y|x,\theta,D)}[y].$$
Furthermore, since the Trader $\{\theta_n\}$ is sampled from the distribution $
q(D) \approx p(y|x,D)$,
\begin{align*}
    & \mathbb{V}_{p(y|x,D)}[\mu_\theta(x)] \approx \frac{1}{N} \sum_{n} (\mu_n - \mu)^2 \\
    & \mathbb{E}_{p(y|x,D)}[\sigma^2_\theta(x)] \approx \frac{1}{N} \sum_{n} \sigma_n^2.
\end{align*}

From above, 
\begin{align*}
     &\mathbb{V}_{p(y|x,D)}[y] \\
     &= \mathbb{V}_{p(\theta|D)}\left[\mathbb{E}_{p(y|x,\theta,D)}[y]\right] + \mathbb{E}_{p(\theta|D)}\left[\mathbb{V}_{p(y|x,\theta,D)}[y]\right] \\
     &\approx \mathbb{V}_{p(y|x,D)}[\mu_\theta(x)] + \mathbb{E}_{p(y|x,D)}[\sigma^2_\theta(x)]\\
     &\approx \frac{1}{N} \sum_{n} (\mu_n - \mu)^2 + \frac{1}{N} \sum_{n} \sigma_n^2.
\end{align*}
\end{proof}
\if 0
Next, we show that our method has at least as same predictive power as the TC method. 
That is, we show that the output of our UTC method is consistent with that of the TC method when the parameters are set to satisfy the assumptions of the following proposition. 
First, we show two lemmas about the consistency of the Traders and Educate algorithm between the UTC and the TC.
\fi
Next, we show that our method does not introduce additional biases from the TC method. 
That is, we show that the expected value of the output of our UTC method is identical with that of the TC method when the parameters are set to satisfy the assumptions of the following proposition. 
First, we show two lemmas about the unbiasedness of the Traders and Educate algorithm of the UTC.
\begin{lemma}[Unbiasedness of Prediction from TC] \label{lemma:prediction}
Let $C$ be the Company trained by the UTC method and $C'$ be the Company trained by the TC method. 
Let $\mathbb{E}[y]$ be the expected return of $C$ and $y'$ be the predicted return of $C'$.
If the parameters of each Traders $\{\theta_n\}_{n \in C}$ and $\{\theta_n\}_{n \in C'}$ is identical and expected weights of Traders $\{ \mathbb{E}[w_n] \}_{n \in  C}$ is identical with the weights of Traders $\{ w'_n \}_{n \in  C'}$, then, $\mathbb{E}[y] = y'$.
\end{lemma}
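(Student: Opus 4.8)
The plan is to unfold both Companies' predictions term by term and observe that the only difference between UTC and TC at the Trader level is the replacement of a deterministic weight vector $w'_n$ by a random weight vector $w_n$ with mean $\mathbb{E}[w_n]$. Concretely, I would first recall that the UTC Company predicts $\mu = \frac{1}{N}\sum_{n} \mu_n$ where, by \eqref{btctrader}, $\mu_n = m_n^\top z_n = \mathbb{E}[w_n]^\top z_n$, and that the signal vector $z_n$ defined in \eqref{eq:signal} depends only on the discrete parameters $\theta_n$ (the stock indices, delays, operators, and activations) and the common past returns $\vecret[1:S][0:t]$, \emph{not} on the weights. Since by hypothesis $\theta_n$ is the same in $C$ and $C'$, the signal vectors coincide: $z_n = z'_n$ for every $n$.

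Next I would write the TC Company's prediction using the simple-averaging aggregation from Section~\ref{sec:TCcompany}, namely $y' = \frac{1}{N}\sum_{n} f_{\theta_n, w'_n}(\vecret[1:S][0:t]) = \frac{1}{N}\sum_{n} (w'_n)^\top z'_n$, where the last equality is just rewriting the Trader formula \eqref{TCtraders} as an inner product of the weight vector with the signal vector. Then the expected UTC prediction is $\mathbb{E}[y] = \mathbb{E}\!\left[\frac{1}{N}\sum_n w_n^\top z_n\right] = \frac{1}{N}\sum_n \mathbb{E}[w_n]^\top z_n$, using linearity of expectation and the fact that $z_n$ is non-random given the data. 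Invoking the two hypotheses — $\theta_n$ identical (hence $z_n = z'_n$) and $\mathbb{E}[w_n] = w'_n$ — each summand matches, so $\mathbb{E}[y] = \frac{1}{N}\sum_n (w'_n)^\top z'_n = y'$.

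I expect the main subtlety, rather than a hard obstacle, to be making precise that $z_n$ is a deterministic function of $(\theta_n, \vecret[1:S][0:t])$ and carries no dependence on the weights, so that expectation passes through cleanly; this is immediate from \eqref{eq:signal} but worth stating explicitly. A secondary point is that the aggregation step must genuinely be simple averaging (with no extra trainable parameters) in both methods for the identity to hold verbatim — if $\mathrm{Aggregate}$ had weights, one would additionally need those to coincide, which I would note as an implicit part of "the parameters being set to satisfy the assumptions." No concentration or approximation argument is needed here: the statement is an exact identity in expectation, and the proof is a one-line linearity-of-expectation computation once the inner-product reformulation and the $\theta$-only dependence of $z_n$ are in place.
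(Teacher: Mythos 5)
Your proof is correct and follows essentially the same route as the paper's: both establish $z_n = z'_n$ from the shared discrete parameters, rewrite each Trader as an inner product $w^\top z$, apply linearity of expectation to get $\mathbb{E}[y_n] = \mathbb{E}[w_n]^\top z_n = (w'_n)^\top z'_n = y'_n$, and conclude by averaging under the simple-averaging aggregation. Your explicit remarks that $z_n$ carries no weight dependence and that the aggregation must be plain averaging are the same caveats the paper's proof relies on, just stated more carefully.
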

\begin{proof}
Let $\{z_n\}_{n \in C}$ and $\{z'_n\}_{n \in C'}$ be the signal of each Traders calculated by \eqref{eq:signal}.
Since, the parameters except weights is same, the signals satisfy $z_n = z'_n $.
From the definition of Trader, the predicted return by TC $\{y_n\}_{n \in C'}$ and our expected return $\{\mu_n\}_{n \in C}$ is calculated as 
\begin{align}
  & \mathbb{E}[y_n] = \mathbb{E}[w_n^\top z_n] = \mathbb{E}[w_n]^\top z_n, \nonumber\\
  & y'_n = w^\top z_n'. \nonumber
\end{align}
Therefore, $\mathbb{E}[y_n] = y'_n$.
If we use a simple averaging in Aggregate as well as TC method, then 
$$\mathbb{E}[y] = \frac{1}{N} \sum_{n} \mathbb{E}[y_n] = \frac{1}{N} \sum_{n} y'_n = y'$$.
\end{proof}
%
\begin{lemma}[Unbiasedness of Educate Algorithm] \label{lemma:educate}
Let $X,Y$ be the training data and $T$ be the Trader in UTC method and whose parameter be $\theta_{T}$ and weight distribution be $w$, $T'$ be the Trader in TC method and whose parameter be $\theta_{T'}$ and weight be $w'$,  If the parameters satisfy $\theta_{T} = \theta_{T'}$, then, the distribution of weights in UTC satisfies $\mathbb{E}[w] = w'$.
\end{lemma}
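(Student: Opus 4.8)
The plan is to compare the two weight-update rules directly. In the TC method, the Educate step updates $w'$ for a Trader with fixed parameters $\theta_{T'}$ by ordinary least squares on the training data $(X,Y)$; in the UTC method, the Educate step updates the weight distribution $w$ by MAP estimation under the assumed multivariate normal model for the weights. So the first step is to write out both estimators explicitly in terms of the signal matrix $Z$ (whose rows are the signals $z_n$ from \eqref{eq:signal} evaluated on the training inputs $X$) and the target vector $Y$: the OLS solution is $w' = (Z^\top Z)^{-1} Z^\top Y$, while the MAP estimate of the mean of $w$ is the ridge/Gaussian-posterior mean $\mathbb{E}[w] = (Z^\top Z + \lambda \Sigma_0^{-1})^{-1} Z^\top Y$ (plus a prior-mean term), where $\Sigma_0$ and $\lambda$ come from the prior on weights.

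Next I would identify the assumption on the prior that the proposition needs — namely, the flat/uninformative prior limit, or more precisely the assumption stated in the surrounding text that the prior distribution of traders is chosen so that MAP reduces to maximum likelihood. Under that assumption the prior-precision term vanishes (or the prior mean term is absorbed), and the MAP mean of $w$ collapses to exactly the OLS solution. Since $\theta_T = \theta_{T'}$ implies the two Traders produce the identical signal matrix $Z$ on the same training data (this is the same observation used in Lemma~\ref{lemma:prediction}, that equal non-weight parameters give $z_n = z'_n$), both estimators are computed from the same $Z$ and $Y$, hence $\mathbb{E}[w] = w'$. I would then conclude by noting this is precisely the hypothesis feeding into Lemma~\ref{lemma:prediction}, so the two lemmas chain together to give the final no-additional-bias proposition.

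The main obstacle I anticipate is being precise about what ``MAP estimation'' means here and under exactly which prior assumption the identity holds: for a genuinely informative Gaussian prior the MAP mean is shrunk relative to OLS and the equality $\mathbb{E}[w] = w'$ is false, so the lemma must be (and, reading the setup, is) implicitly conditioned on the prior being uninformative — the same ``some assumptions in the prior distribution of traders'' caveat the introduction flags. A secondary technical point is handling the well-definedness of $(Z^\top Z)^{-1}$ (rank/identifiability of the signals), which I would dispatch with a brief genericity remark or by appealing to the regularized solution in the degenerate case. Modulo stating that prior assumption cleanly, the argument is a short linear-algebra computation with no real surprises.
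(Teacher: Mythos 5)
Your argument is correct and rests on the same core computation as the paper's proof --- identical non-weight parameters give the same signal matrix $Z$, and then one compares the Gaussian-prior MAP mean with the TC weight update --- but you resolve the prior/regularization tension differently. You take the TC update to be plain OLS, $w' = (Z^\top Z)^{-1}Z^\top Y$, observe that an informative Gaussian prior would shrink the MAP mean away from this, and therefore pass to the flat-prior limit so that MAP collapses to maximum likelihood. The paper instead writes the TC least-squares update in its regularized form, $w' = (Z^\top Z + \lambda I)^{-1}Z^\top Y$, and keeps a proper prior $w \sim \mathcal{N}(0,\sigma_0^2 I)$ with noise model $\mathcal{N}(Y,\sigma^2)$, obtaining exact equality by \emph{matching} the hyper-parameters via $\lambda = \sigma^2/\sigma_0^2$. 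The paper's choice is the more natural one for this method: the UTC Trader needs a finite prior variance to produce the posterior covariance $\Sigma$ that feeds the uncertainty estimate in \eqref{btctrader}, so declaring the prior uninformative would undercut the rest of the construction, whereas matching $\lambda$ keeps the prior proper and makes the ``assumptions in the prior distribution'' explicit as a concrete hyper-parameter identity. Your version does buy something the paper omits, namely a clean handling of the degenerate case: with $\lambda > 0$ the inverse $(Z^\top Z + \lambda I)^{-1}$ always exists, but your remark about rank-deficiency of $Z$ is exactly the issue the paper silently avoids by never considering $\lambda = 0$. Either route proves the lemma as stated (the statement is conditional on choosing the prior appropriately); just be aware that the intended reading is ``choose $\sigma_0^2 = \sigma^2/\lambda$,'' not ``let $\sigma_0^2 \to \infty$.''
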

\begin{proof}
Let $Z$ and $Z'$ be the signal of each Trader.
Since, the parameters are identical, the signals satisfy $Z = Z' $.
We update the Trader's weight distribution to maximize the posterior distribution of weights given the data.
Suppose the prior distribution of weights is given as 
$$ w \sim \mathcal{N}(0,\sigma_0^2), T \sim \mathcal{N}(Y,\sigma^2). $$
Then, the posterior distribution is 
$$ p(w|T,Z) \propto \exp\left( \frac{-\|Zw-T\|^2}{2\sigma^2} - \frac{1}{2\sigma_0^2} w^\top w\right).$$
On the other hand, since the TC method updated weight by least square method, their weight is $$w' = (Z^\top Z + \lambda I)^{-1}Z^\top Y $$.

If we set the parameter of prior to satisfy $\lambda = \frac{\sigma^2}{\sigma_0^2}$, both weights satisfy $\mathbb{E}[w] = w'$.
\end{proof}
Finally, we prove the unbiasedness of the Company's prediction between UTC and TC when the parameters are set to satisfy the assumptions of these lemmas.
\begin{proposition}[Unbiasedness of Training and Prediction from TC]\label{prop2}
If the hyper-parameters of Company $C$ trained by the UTC method and Company $C'$ trained by the TC method satisfy the assumption of Lemma \ref{lemma:prediction} and \ref{lemma:educate}, the expected return of $C$ equals to the expected return of $C'$.
\end{proposition}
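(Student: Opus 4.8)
The plan is to proceed by induction on the number of training rounds, carrying as the inductive invariant exactly the pair of hypotheses of Lemmas~\ref{lemma:prediction} and~\ref{lemma:educate}: after each round, the $n$-th Trader of the UTC Company $C$ and the $n$-th Trader of the TC Company $C'$ share the same discrete parameter $\theta_n$, and the weight distribution of the former satisfies $\mathbb{E}[w_n] = w'_n$, where $w'_n$ is the deterministic weight of the latter. To make the comparison well posed I would first \emph{couple} the two training runs: the random draws in the initialization line of Algorithm~\ref{TCcompanygen} and in the Gaussian-mixture sampling step are taken identical across the two runs, so that the only genuine difference between the methods is how each Trader's weights are represented (a point mass versus a normal distribution) and updated (least squares versus MAP with $\lambda = \sigma^2/\sigma_0^2$).

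For the base case, the line $\theta_n, w_n \sim \mathrm{Uniform}$ executes identically under the coupling, giving $\theta_n = \theta'_n$ and $w_n = w'_n$ (a degenerate law with mean $w'_n$), so the invariant holds before any update. For the inductive step, assume it holds at the start of a round. The Educate step (Algorithm~\ref{companyeducate}) evaluates each Trader's cumulative return $R_n$ at its mean (MAP) weight, which by the invariant equals $w'_n$, and at the same $\theta_n$; hence the scores $R_n$ agree with those of $C'$, the bottom-$Q$ percentile $R^*$ coincides, and the same set of ``poorly performing'' Traders is selected. Applying Lemma~\ref{lemma:educate} to each selected Trader gives $\mathbb{E}[w_n] = w'_n$ after re-education with $\theta_n$ unchanged, and for the non-selected Traders the invariant is carried over verbatim. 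The Prune-and-Generate step (Algorithm~\ref{TCcompanygen}) is the same procedure in both methods once it is understood to act on the mean weights $\mathbb{E}[w_j]$: identical $R_n$ produce identical pruning, the same Gaussian mixture is fitted to the same surviving $(\theta_j, \mathbb{E}[w_j])$, and the coupled samples yield identical new $(\theta_j, w_j)$, restoring the invariant. Since $\mathrm{Aggregate}$ is plain averaging with no trainable parameters, the third training step is vacuous, so the invariant survives the round. After the final round, Lemma~\ref{lemma:prediction} applies directly and yields $\mathbb{E}[y] = y'$, which is the claim.

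I expect the main obstacle to be the pruning/selection step rather than the arithmetic chaining of the lemmas: one must argue that the \emph{same} Traders are judged poorly performing in both runs, which holds only if the return functional $R$ in Algorithms~\ref{companyeducate} and~\ref{TCcompanygen} is evaluated at the Trader's mean (equivalently its MAP point), so that the invariant $\mathbb{E}[w_n] = w'_n$ transfers to equality of the scalar scores. If $R_n$ were instead computed from a sampled weight, the comparison would hold only in expectation and the selected index sets could diverge, breaking the coupling; making this evaluation convention explicit — or invoking linearity of $R$ in $w$ together with a fixed deterministic tie-break — is the delicate point of the argument, and I would state it as an auxiliary assumption alongside those of Lemmas~\ref{lemma:prediction} and~\ref{lemma:educate}.
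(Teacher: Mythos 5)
Your proposal is correct and follows essentially the same route as the paper: the paper likewise argues that the Educate step preserves $\mathbb{E}[w_n]=w'_n$ via Lemma~\ref{lemma:educate}, asserts that Prune-and-Generate is executed identically in both methods, and then concludes by Lemma~\ref{lemma:prediction}. Your version is in fact more careful than the paper's brief argument --- the explicit coupling of the random draws and the observation that the pruning scores $R_n$ must be evaluated at the mean (MAP) weights so that the same Traders are selected are points the paper leaves implicit.
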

\begin{proof}
From Lemma \ref{lemma:educate}, The Traders' weights are consistent with respect to the expected value in the Educate process.
On the other hand, the TC method and our method are executed in the same way for the Prune-and-Generate algorithm. \\
Therefore, the distribution of Traders' parameters $\Theta$ in our method is identical to that of Traders' parameters $\Theta'$ in the TC method. Also, Traders' weights are also identical with respect to expected values. Therefore, from Lemma \ref{lemma:prediction}, the expected return of $C$ equals to the expected return of $C'$.
\end{proof}

\section{Experiment}
\subsection{Analysis on Synthetic Data} \label{sec:synthetic}
\subsubsection{Simple Nonlinear Case}
We used the following artificial data as a simple example of nonlinear time series.
\begin{align}
    & y_{0}(t) = 0.5 y_{0}(t-1) - 0.5y_{0}(t-1)y_{1}(t-1) \\
    & ~~~~~~~~~+ 0.1\min(y_{0}(t-1),y_{1}(t-1)) + \varepsilon_{0}(t) \nonumber \\
    & y_{1}(t) = -0.2 y_{1}(t-1) + 0.8y_{0}(t-1) \\
    & ~~~~~~~~~+ 0.5\max(y_{0}(t-1),y_{1}(t-1)) + \varepsilon_{1}(t) \nonumber
\end{align}
where $\varepsilon_{0}(t),\varepsilon_{1}(t)$ are generated by independently and identical normal distribution $N(0,0.1)$.

We used a simple Vector AutoRegression~(VAR~\cite{hamilton2020time}) model with lag 1 as a baseline.
Makridakis et al.\cite{makridakis2018statistical} demonstrated that traditional statistical methods such as the VAR model are more accurate than machine learning ones and suitable for the baseline of time series prediction tasks.
We used 1800 samples for training and used 200 samples for the test.
\begin{figure}[tb]
  \begin{minipage}[b]{0.5\textwidth}
    \centering
    \includegraphics[keepaspectratio, scale=0.5]{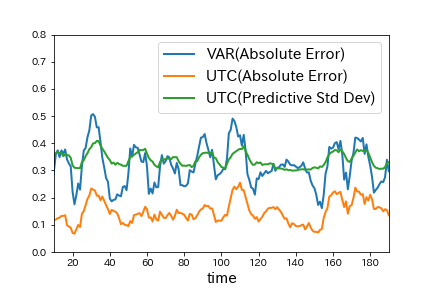}
  \end{minipage}
  \begin{minipage}[b]{0.5\textwidth}
    \centering
    \includegraphics[keepaspectratio, scale=0.5]{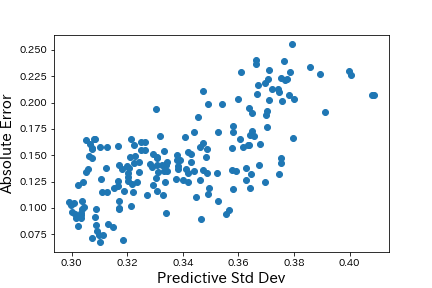}
  \end{minipage}
  \caption{Prediction error of synthetic data}
  \label{fig:synthetic}
\end{figure}

The upper panel of Figure \ref{fig:synthetic} shows the comparison of prediction error~(Absolute Error) between UTC and VAR. 
The lower panel shows the correlation diagram between the predictive standard deviation and the prediction error.

We can confirm that our UTC method makes better predictions than the VAR model because it captures nonlinear features of synthetic data. 
Also, we can observe that the estimated standard deviation is larger when the prediction error is large, i.e., prediction is difficult.
That means our UTC method can capture such uncertainty.

\subsubsection{Dataset Shift Case}
Next, we evaluated the performance of the UTC method in the dataset shift case using the artificial data defined as follows. 
\begin{align*}
    y_2(t) = \left\{ \begin{array}{ll}
        y_0(t-1)+y_1(t-1) + \epsilon_2(t) &  (t < 200) \\
        y_0(t-1)-y_1(t-1) + \epsilon_2(t) & (t \geq 200)
    \end{array} \right.
\end{align*} 
where $\varepsilon_{2}(t)$ are generated by independently and identical normal distribution $N(0,0.1)$.
To verify the adaptability of our method to dataset shift, we sequentially updated the model using the last 100 data as training data.
Figure \ref{fig:datashift} shows the result. 
In this experiment, the error increases as data distribution are shifted (Blue region). Along with this, the estimated variance is also increasing. Furthermore, as the environment shifts to the original static environment, the error and the estimated variance decrease again.
This means that our method detects the decrease in reliability of prediction due to the dataset shift.
\begin{figure}[tb]
  \begin{minipage}[b]{0.5\textwidth}
    \centering
    \includegraphics[keepaspectratio, scale=0.3]{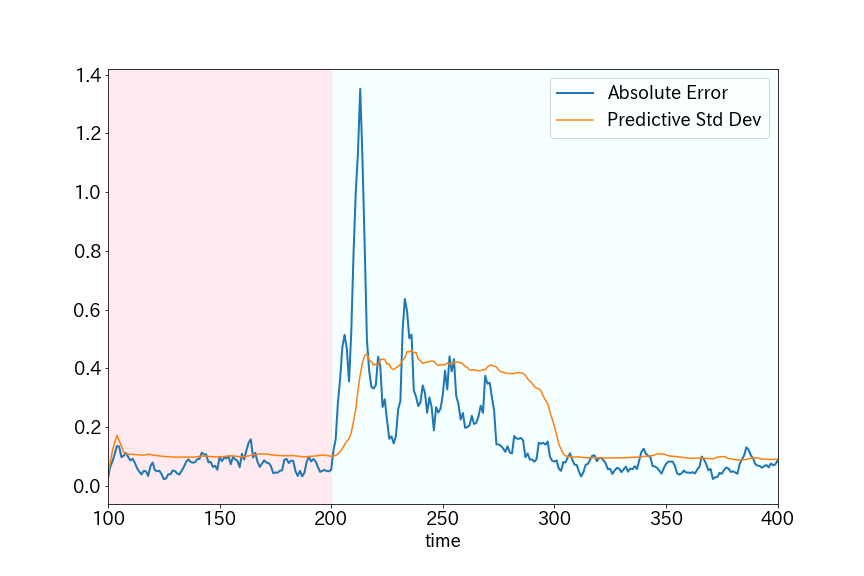}
  \end{minipage}
  \caption{Prediction error and predictive standard error in dataset shift}
  \label{fig:datashift}
\end{figure}
\subsection{Experiments on Real Datasets}\label{sec:exp}
\subsubsection{Dataset}
We tested two different settings using real market data, the TOPIX100 index.
The TOPIX100 Index is a market capitalization-weighted index of large-cap Japanese stocks, consisting of the top 100 stocks with particularly high market capitalization and liquidity (trading value) among the stocks included in the TOPIX index.

We performed daily trading at the close and hourly intraday trading. 

In the daily trading setting, we used daily closing prices of the constituents of the TOPIX100 index from January 4, 2010, to January 31, 2022.
We used the data form from January 4, 2010, to December 30, 2017, for training and from January 4, 2018, to January 31, 2022, for the test.

In the intraday trading setting, we used hourly closing prices of the constituents of the TOPIX100 index from January 4, 2016, to June 30, 2021.
We used the data form from January 4, 2016, to December 30, 2019, for training and from January 4, 2020, to June 30, 2021, for the test.

\subsubsection{Experimental Settings}
The following settings are the same as \cite{ito2021trader}.
We introduced a time window $w>0$ and a trading execution lag $l>0$ as in \cite{ito2021trader}.
Throughout experiments, we used $w= 10$ and $l = 1$.
We trained models using observations $\vecret[1:S][t-l-w:t-l]$ and predict returns at every time $t$.

Recall $r_i[t]$ be the return of stock $i$ ($i \in \{ 1, \ldots, S \}$) at time $t$. 
Here, the time between $t$ and $t+1$ represents 1 day in the case of daily trading, and 1 hour in the case of intraday trading.

To evaluate the effectiveness of our UTC method, we performed the following virtual trading strategy. 
We buy the stock if the predicted label for the time $t$ (indicating whether the price of the time $t+1$ would rise or fall) is positive and sell it otherwise.
We define the strategy's return at time $t$ as 
$$R[t] = \mean(\hat{b}_i[t]\times r_i[t])$$
where $\hat{r}_i[t]$ is its prediction from each method, $\hat{b}_i[t] = \mathrm{sign}(\hat{r}_i[t])$ and $\mean(\cdot)$ represents the average value of the function over its input.
We compared our UTC methods with the following strategies:
\begin{itemize}
    \item Market: Simply buy all stocks in TOPIX100 index equally~\cite{demiguel2009optimal} i.e., $R[t] = \mean(r_i[t])$.
    \item VAR: $\hat{r}_i[t]$ is estimated by the VAR model using historical observations. 
    We used the VAR(1) model selected by the AIC for lags from 1 to 10.
    \item RF: $\hat{r}_i[t]$ is estimated by Random Forest model with Scikit-learn package~\cite{pedregosa2011scikit}.
    We set "n\_estimators" to 100, "min\_samples\_split" to 10, "min\_samples\_leaf" to 4, "max\_features" to sqrt, "max\_depth" to 60.
    We determined these hyper-parameters by cross-validation.
    \item TC: $\hat{r}_i[t]$ is estimated by the original TC method.
    \item UTC: Let $\sigma_i[t]$ be its prediction uncertainty from UTC method. Define $\hat{b}_i[t] = \mathrm{sign}(\hat{r}_i[t] \times I_{A})$ where $I$ is an indicator function and $A$ is an event when $\sigma_i[t]$ is higher than the threshold calculated from past predictive variance. 
    Then $R[t] = \mean(\hat{b}_i[t]r_i[t])$.
\end{itemize}

Table \ref{tab:hyperparamters} lists the hyper-parameters used in both the TC and UTC methods.
\begin{table}[tb]
\caption{Hyper-parameters used in TC and UTC in the experiment}
\label{tab:hyperparamters}
\centering
\begin{tabular}{ccc}
    \hline
    \textbf{Parameter} & \multicolumn{1}{c}{\textbf{Value}} & \textbf{Def} \\
    $\numterm$ &  $\{1,\cdots,10\}$  & Definition \eqref{TCtraders}\\
    $D_j,F_j$ & $\{0,\cdots,10\}$ & Definition \eqref{TCtraders}\\
    $A_j(x) $ &  $\{x,\mathrm{tanh}(x),\mathrm{exp}(x),\mathrm{sign}(x),\mathrm{ReLU}(x)\}$ & Definition \eqref{TCtraders}\\
    $O_j(x,y)$ & \begin{tabular}{@{}c@{}}$\{x+y, x-y , xy, x, y,\max(x,y)$, \\ $ \min(x,y),x>y,x<y,\textrm{Corr}(x,y) \}$\end{tabular}  & Definition \eqref{TCtraders}\\
    $N$ & $200$ & Algorithm \ref{companypredict} \\
    Aggregate & Simple Averaging & Algorithm \ref{companypredict} \\
    $Q$ & $0.1$ & Algorithm \ref{TCcompanygen} \\
    \hline
\end{tabular}
\end{table}

\subsubsection{Performance Measures}
To evaluate the performances of each strategy, we adopted the four metrics widely used in financial experiment\cite{brandt2010portfolio} as follows. 
$T_Y$ represents the number of periods of trading in one year.
We used $T_Y = 250$ in the daily setting, $T_Y = 1250$ in the hourly setting.
\begin{itemize}
    \item \textbf{Annualized Return(AR)}: 
    We define the Annualized Return \textbf{(AR)} as 
        \begin{align}
        \mathrm{AR} := T_Y\times \mean(R[t])
        \end{align}
    This measure represents the profitability of the strategy.
    \item \textbf{Annualized Risk(RISK)}:We define the Annualized Risk \textbf{(RISK)} as
        \begin{align}
         \mathrm{RISK} := \sqrt{T_Y} \times \sqrt{\sum_{t=1}^T(R[t]-\mean(R[t]))^2/(T-1)}.
        \end{align}
    \item \textbf{Sharpe Ratio (SR)}: The Sharpe ratio \cite{sharpe1964capital}, or the Return/Risk ratio (R/R) is the return adjusted by its risk. 
    That is, 
        \begin{align}
        \mathrm{SR} := \mathrm{AR} / \mathrm{RISK}.
        \end{align}
    \item \textbf{Maximum DrawDown (MDD)}: The Maximum DrawDown is defined as the largest drop from an extremum~\cite{magdon2004maximum}:
        \begin{align}
        \mathrm{MDD} := \max_{1 \leq t \leq T}\max_{t < s \leq T}(1 - C[t]/C[s])
        \end{align}
    where $C[t] := \sum_{j=1}^t R[j]$ is the cumulative return.
    \item \textbf{Cramer Ratio(CR)}: The CR are adjusted returns by its MDD~\cite{young1991calmar}. CR is more sensitive to drawdown events that occur less frequently:
        \begin{align}
        \mathrm{CR} := \mathrm{AR} / \mathrm{MDD}
        \end{align}
\end{itemize}

Larger AR, SR, and CR are better, while smaller RISK and MDD are better.

\subsubsection{Result}

\begin{table}[t]
\caption{Performance Comparison in the daily trading setting: Average results of 5 repeated experiments}
\label{tab:jpn}
\begin{center}
\begin{tabular}{cccccc}\hline\hline
     & UTC & TC & Market & RF & VAR \\ \hline
AR  &   5.42  & \textbf{9.03} & 8.56 & 7.80 & 8.13 \\
RISK & \textbf{4.45} & 9.05 & 17.90 & 8.37 & 16.90   \\
SR  & \textbf{1.22}   & 1.00 & 0.48 & 0.93 & -0.48\\
MDD  & \textbf{-6.43} & -13.99 & -28.66 & -12.12 & -27.03\\ 
CR & \textbf{0.86} & 0.64 & 0.30 & 0.64 & 0.30 \\
\hline
\end{tabular}
\end{center}

\end{table}
\begin{figure}[tb]
    \centering
    \includegraphics[width=\columnwidth]{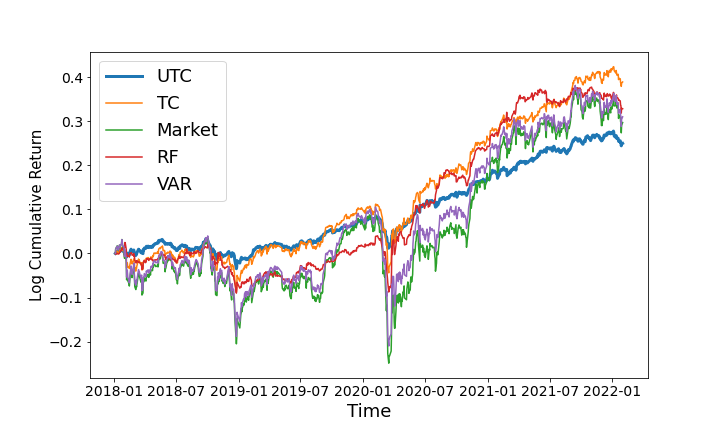}
    \caption{Cumulative daily returns on JPN market}
    \label{fig:offline_jpn}
\end{figure} 

Table \ref{tab:jpn} shows the comparison between our proposed method (UTC) and other baselines in the daily trading. We ran the experiment 5 times and averaged the result.
Figure \ref{fig:offline_jpn} shows the transition of cumulative return $C[t]$ of each method.
As previous studies have shown, the TC method was confirmed to earn higher returns than other methods in both daily and intraday trading settings.
However, TC also had a high risk because it does not consider uncertainty.
On the other hand, our method invests only in stocks with low predictive uncertainty, thereby achieving stable returns while limiting risk. Therefore, our method achieved the highest SR and CR, risk-adjusted performance evaluation metrics.
In particular, during the COVID-19 shock in the market in the first half of 2020, our method successfully captured market uncertainties and limited investments during those periods and significantly suppressed the sharp fall.

Table \ref{tab:jpn_hour} shows the comparison between our proposed method (UTC) and other baselines in the intraday trading. We ran the experiment 5 times and averaged the result.
We also show the transition of cumulative return $C[t]$ of each method in  Figure \ref{fig:offline_jpn_hour}.
As in the daily case, our UTC method also achieved the highest SR and CR in the intraday case.

\begin{table}[t]
\caption{Performance Comparison in the hourly trading setting: Average results of 5 repeated experiments}
\label{tab:jpn_hour}
\begin{center}
\begin{tabular}{cccccc}\hline\hline
     & UTC & TC & Market & RF & VAR \\ \hline
AR  & 9.60 & \textbf{11.9} & 7.75 & 2.70 & 4.11 \\
RISK & 6.42 & 8.42 & 21.2 & 11.62 & \textbf{5.06}   \\
SR  & \textbf{1.52} & 1.46 & 0.37 & 0.23 & 0.81\\
MDD  & -6.95 & -9.83 & -30.96 & -15.87 & \textbf{-4.57}\\ 
CR & \textbf{1.52} & 1.37 & 0.25 & 0.17 & 0.90 \\
\hline
\end{tabular}
\end{center}
\end{table}
  %
\begin{figure}[t]
    \centering
    \includegraphics[width=\columnwidth]{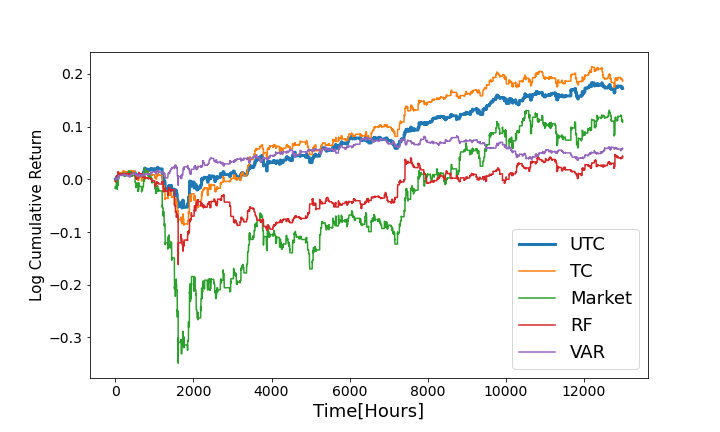}
    \caption{Cumulative hourly returns on JPN market}
    \label{fig:offline_jpn_hour}
\end{figure}

\section{Related Work}
\subsection{Stock Return Prediction in Finance}

Predicting future stock prices has been actively researched for a long time but is still a highly challenging task~\cite{chen2018integrating}.
Stock price prediction can be broadly divided into two types: fundamental analysis and technical analysis\cite{atsalakis2009,sedighi2019novel}.
Fundamental analysis focuses on fundamental information of the corporation, such as a company’s revenues and expenses, yearly growth rate, and other information contained in financial statements.
On the other hand, the technical analysis predicts using market data such as historical stock price and volume data.

The methods of the former analysis perform a regression analysis using cross-sectional data of fundamental information.
Such strategies aim to build a portfolio for investing as a subset of a large bucket of stocks. These approaches are frequently applied to a practical quantitative investment strategy~\cite{nakagawa2020ric}.
One of the most significant interests in a cross-sectional analysis lies in finding ``factors'' that have strong predictive powers to the stock return of portfolio strategy~\cite{fama1992}.
This argument inspires a lot of subsequent studies that propose more sophisticated versions of factors~\cite{harvey2016}. 

The methods of the latter strategy analyze past stock prices as time-series data~\cite{atsalakis2009} and are applied to a practical trading strategy that focuses on a particular stock. 
The autoregressive integrated moving average (ARIMA) model, Vector autoregression (VAR), and generalized autoregressive conditional heteroscedasticity (GARCH~\cite{bollerslev1986generalized}) model are basic benchmarks often used in financial time series prediction. These linear models take into account uncertainties as error terms, and some studies have also modeled their distributions~\cite{Ledolter1979}.

\subsection{Application of Machine Learning to Stock Return Prediction}
With the introduction of artificial intelligence and machine learning, these techniques have received increased attention in stock prediction studies\cite{atsalakis2009,cavalcante2016computational}. 
Unlike traditional time series methods, these methods can handle the nonlinear, noisy, and complex data of the stock market, leading to more effective predictions~\cite{chen2018integrating,ito2021trader}. 
Among them, Trader-Company~(TC) method achieves state-of-the-art performance in this field~\cite{ito2021trader}.
The TC method takes into account the dynamism of the stock market and has both high predictive power and interpretability.
The TC method is based on a meta-heuristic approach, which has been extensively applied to stock prediction~\cite{soler2017survey,sedighi2019novel}.

Another promising approach is the application of neural-network-based models such as CNNs and RNNs.
They have been proposed as innovative and advantageous alternatives to traditional methods~\cite{hu2021,Lai2018,ijcai2020-640,imajo2021deep}.

These studies showed promising performance, but unfortunately, they lack a perspective on the uncertainty of the prediction.
In other words, all of these methods are only focused on point prediction. This is a major practical problem because of the possibility of significant losses in investment strategies.
As noted in the previous subsection, research in the field of finance has incorporated uncertainty, but in terms of adapting machine learning to finance, there remains room for improvement. Among them, several applications of Bayesian neural networks have been proposed to deal with the issue. These studies show their effectiveness in highly volatile and uncertain markets while COVID-19 is reported~\cite{chandra2021}. 
Inspired by these studies, this study introduced the uncertainty estimation technique to the TC method to capture uncertainty.

\subsection{Uncertainty Estimation with Weight Parameterization}
Most methods for uncertainty estimation have been proposed under a Bayesian framework. Bayesian estimation requires some approximation to estimate the uncertainty unless the linear model since exact computation is usually intractable. 
One basic approach to the approximation is to set parameters corresponding to the mean and variance in the model and use these to approximate the posterior distribution. 
In linear regression, Bayesian regression follows this approach, and methods that follow this approach have been proposed for other standard models, such as neural networks~\cite{pmlr-v37-blundell15} and decision trees~\cite{pmlr-v119-duan20a}. 
The former assumes the Gaussian distribution for weights in neural networks, and the latter assumes the outputs follow Gaussian distributions and predict their mean and variance. Both works train the model to minimize the KL divergence from the posterior distribution. Our method introduces these approaches in the estimation of the uncertainty in the Trader. This approach is computationally efficient, but it has been pointed out that these methods tend to make predictions biased toward one specific mode ~\cite{Fort2019,malinin2021uncertainty}.
 
 \subsection{Uncertainty Estimation with Ensemble}
Another approach is ensemble-based, which approximates the posterior distribution using multiple outputs. For a neural network, Gal and Ghahramani proposed Monte Carlo Dropout~\cite{pmlr-v48-gal16}, which applies dropout during inference and estimates uncertainty. This method can be interpreted as using multiple sub-networks that share the parameters. 
Deep Ensemble~\cite{NIPS2017_9ef2ed4b,d'angelo2021repulsive} trains several neural networks with different initialization and estimates the predictive uncertainty by the variance of their outputs. Learning independent networks makes them less likely to be biased toward one particular mode and enhances their robustness~\cite{fort2020deep}.
For tree-based models, initial work trained using MCMC~\cite{NIPS2006_1706f191} and the combination with GBDT~\cite{malinin2021uncertainty} shows better performance in terms of both prediction and uncertainty estimation.
These methods are experimentally confirmed to be more robust to dataset shift than methods that explicitly learn distributions\cite{NIPS2017_9ef2ed4b,malinin2021uncertainty}. These characteristics are effective in non-stational environments such as financial markets.
Our method introduces these approaches for the Company to capture the uncertainty due to the dataset shift.

\section{Conclusion}
We proposed a novel approach called the Uncertainty Aware Trader-Company Method~(UTC), which extends the TC method with probabilistic modeling framework.
The UTC method can keep the predictive power and interpretability of the TC method while capturing uncertainty.

Our contributions are as follows:
\begin{itemize}
    \item We showed that the UTC method could estimate the prediction's uncertainty (the posterior variance) under the assumption that the empirical distribution is a good approximation of the posterior distribution.
    \item We also proved that the UTC method does not introduce additional biases from the TC method.
    \item We confirmed on synthetic data that the UTC method could detect the situations where the prediction is difficult or the data distribution is abruptly changed.
    \item We demonstrated on real market data that the UTC method could achieve higher risk-adjusted returns than baselines.
\end{itemize}

For the direction of further study, we may consider uncertainty when recruiting or firing Traders in the Prune-and-Generate algorithm.
Furthermore, our method assumed the distribution of the parameters is the Gaussian mixture distribution. We may consider modeling them in a non-parametric way.

\if0
\bibliographystyle{IEEEtran}
\bibliography{reference}
\fi
\end{document}